\documentclass[a4paper]{article}
\RequirePackage{fullpage}

\usepackage[usenames,dvipsnames]{xcolor}
\usepackage[breaklinks,colorlinks, citecolor={BlueViolet}, linkcolor={Blue},urlcolor=Maroon]{hyperref}
\usepackage{amsmath}
\usepackage{amsfonts}
\usepackage{amsthm}
\usepackage{systeme, mathtools} \usepackage{tikz,pgfkeys}
\usetikzlibrary{quotes}
\usepackage[shortlabels]{enumitem}
\usepackage{subcaption}

\newtheorem{theorem}{Theorem}[section]
\newtheorem{lemma}[theorem]{Lemma}
\newtheorem{corollary}[theorem]{Corollary}
\newtheorem{proposition}[theorem]{Proposition}
\newtheorem*{definition}{Definition}

\tikzset{
  filled vertex/.style={circle, draw=blue, fill=black!50, inner sep=1pt},
  empty vertex/.style={circle, draw, fill=white, inner sep=1.5pt, minimum width=1.5pt}
}

\newcommand{\OPT}{\ensuremath{\operatorname{opt}}}
\newcommand{\ce}{\ensuremath{\operatorname{ce}}}
\newcommand{\btt}{\ensuremath{\operatorname{btt}}}
\newcommand{\Gt}{\ensuremath{G\vee K_{t}}} \newcommand{\Gm}{\ensuremath{G\vee K_{m}}} 

\title{Cluster Deletion is as Hard to Approximate as Vertex Cover}

\author{Yixin Cao\thanks{Department of Computing, Hong Kong Polytechnic University, Hong Kong, China.  \texttt{yixin.cao@polyu.edu.hk}. Supported in part by the National Natural Science Foundation of China (NSFC) under grant 62372394.} \and Ying Xu\footnotemark[1]}

\begin{document}

\maketitle

\begin{abstract}
  Recent breakthroughs in Cluster Editing have motivated attempts to adapt these approaches to obtain better-than-$2$ approximations for Cluster Deletion. We rule out this possibility under the Unique Games Conjecture: Cluster Deletion is NP-hard to approximate within a factor of $2-\epsilon$ for every fixed $\epsilon>0$, matching the known $2$-approximation [Veldt et al., WWW 2018].  Our approximation-preserving reduction from Vertex Cover also implies NP-hardness of approximation within $\sqrt2-\epsilon$.  We also show that better-than-$2$ approximations are possible in restricted settings.

  We close the paper with a brief discussion of the relationship between Cluster Editing and Bad Triangle Transversal. In particular, we give a $31$-vertex graph~$G$ for which the two optimal values differ, answering an open question of Adriaens and Tatti [ICML 2026].
\end{abstract}

\section{Introduction}

A \emph{cluster graph} is a graph in which every component is a clique.
Given a graph, the \emph{Cluster Deletion} problem asks for a minimum number of edges whose deletion results in a cluster graph.
Equivalently, Cluster Deletion asks for a partition of the vertex set into cliques that minimizes the number of edges crossing between different parts.
The problem was first motivated by applications in clustering gene networks~\cite{ben-dor-99-clustering-gene}.
More broadly, it arises as a natural special case of graph-based clustering frameworks in which pairwise similarities are represented by edges and the desired output is a decomposition into internally consistent groups~\cite{charikar-04-approximate-clustering, veldt-18-community-detection}.

A closely related problem is \emph{Cluster Editing}, where one is allowed to both insert and delete edges, and the objective is to minimize the total number of modifications.
Van Zuylen and Williamson~\cite{zuylen-09-deterministic-approximation} studied a common generalization of these problems: the input consists of a graph together with a set of friendly pairs and a set of hostile pairs, and the goal is to find a minimum cluster editing solution that does not delete any friendly pair or add any hostile pair.
We refer to this problem as \emph{Constrained Clustering}, also known as Constrained Correlation Clustering.
Constrained Clustering can be modeled as weighted Cluster Editing by assigning prohibitively large weights to deleting friendly pairs and to adding hostile pairs.
Cluster Deletion is the special case in which all non-edges are hostile.

Both Cluster Editing and Cluster Deletion are NP-complete and have been studied extensively from the perspective of approximation algorithms.  For Cluster Editing, early approximation algorithms were based on the canonical linear programming (LP) relaxation, culminating in a ratio of $2.06$~\cite{chawla-15-correlation-clustering}, close to the integrality gap of~$2$ for this LP~\cite{charikar-04-approximate-clustering}.
Attempts to close this small gap within the same framework were unsuccessful.
Cohen-Addad et al.~\cite{cohen-addad-22-sherali-adams} instead used a constant number of rounds of the Sherali--Adams hierarchy to break the factor-$2$ barrier, obtaining a $(1.994+\epsilon)$-approximation; this was later improved to $1.73+\epsilon$~\cite{cohen-addad-23-preclustering}.  The latest progress is due to Cao et al.~\cite{cao-24-lp,cao-25-sublinear}, who introduced a new LP relaxation and obtained an approximation ratio of $1.485$.\footnote{The conference versions claimed a slightly better ratio, but a bug was later found in the proof; see arXiv:2404.17509 and arXiv:2503.20883.}
Garc\'{i}a-Soriano and Schohn~\cite{garcia-soriano-26-1.387} recently announced a $(1.3865+\epsilon)$-approximation based on the same LP.

Most known approximation algorithms for Cluster Deletion have been derived from techniques for Cluster Editing.  Charikar et al.~\cite{charikar-04-approximate-clustering} showed that their algorithm can be adapted to give a $4$-approximation for Cluster Deletion.  Van Zuylen and Williamson~\cite{zuylen-09-deterministic-approximation} derandomized the randomized rounding scheme of Ailon et al.~\cite{ailon-08-aggregating-inconsistent-information}, obtaining a $3$-approximation, and showed that the same approach applies to Constrained Clustering. Veldt et al.~\cite{veldt-18-community-detection} used a refined rounding analysis to improve the approximation ratio for Cluster Deletion to $2$.

Given the recent breakthroughs for Cluster Editing, it is natural to ask whether the ideas of~\cite{cohen-addad-22-sherali-adams,cohen-addad-23-preclustering,cao-24-lp, cao-25-sublinear} can be adapted to obtain a better-than-$2$ approximation for Cluster Deletion.  Kalavas et al.~\cite{kalavas-25-constrained-correlation-clustering} reported partial progress in this direction.  However, we show that the factor-$2$ barrier for Cluster Deletion is of a very different nature from that for Cluster Editing.

On the negative side, Shamir et al.~\cite{shamir-04-graph-modification-problem} proved that, assuming $\mathrm{P}\ne\mathrm{NP}$, there is a constant $c>1$ such that no polynomial-time algorithm can approximate Cluster Deletion within a factor smaller than~$c$.  Dessmark et al.~\cite{dessmark-07-cluster-edge-deletion} showed that $c\ge 881/880$; see also \cite{adriaens-26-bad-triangle-transversals}.  We improve this lower bound to~$\sqrt{2}$.  Moreover, assuming the Unique Games Conjecture (UGC), the $2$-approximation of Veldt et al.~\cite{veldt-18-community-detection} is already optimal.

\begin{theorem}\label{thm:p-time}
  Let $\epsilon>0$ be a fixed constant.
  It is NP-hard to approximate Cluster Deletion within a factor of~$\sqrt{2}-\epsilon$.
  Assuming the UGC, it is NP-hard to approximate Cluster Deletion within a factor of~$2-\epsilon$.
\end{theorem}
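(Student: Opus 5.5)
The plan is an approximation-preserving reduction from Vertex Cover, then invoking the known hardness results for Vertex Cover. These are NP-hardness within $\sqrt2-\epsilon$ (Khot, Minzer and Safra) and, under the UGC, within $2-\epsilon$ (Khot and Regev). In the hard instances of both results the minimum vertex cover has size $\tau(G)=\Theta(n)$; I only need $\tau(G)\ge1$ together with a polynomial blow-up, and $\tau(G)\ge1$ holds whenever $G$ has an edge.

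\textbf{Construction.} Given an $n$-vertex graph $G$ with $\tau(G)\ge1$, let $\bar G$ be its complement and build $H=\bar G\vee K_t$ with $t=\lceil n^2/\epsilon\rceil$. Here $\vee$ is the join: every vertex of $K_t$ is adjacent to every vertex of $\bar G$. Cliques of $\bar G$ are exactly the independent sets of $G$, so complements of clique vertex sets in $\bar G$ are exactly the vertex covers of $G$.

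\textbf{Upper bound on the optimum.} Take a minimum vertex cover $X$ of $G$. The set $K_t\cup(V(G)\setminus X)$ is a clique of $H$. Use it as one cluster and make every vertex of $X$ a singleton. The deleted edges are the $t|X|$ edges from $X$ to $K_t$ and at most $\binom n2$ edges of $\bar G$. Hence
\[
\OPT(H)\le t\,\tau(G)+n^2\le(1+\epsilon)\,t\,\tau(G).
\]

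\textbf{Converting a solution back.} Let $\mathcal P$ be any clique partition of $H$ with cost $c$. Pick the cluster $C$ of $\mathcal P$ that contains the most vertices of $K_t$, and let $a=|C\cap K_t|$. I split into two cases.
\begin{itemize}
\item If $a<t/2$, then the clique $K_t$ is cut into parts each of size less than $t/2$. This forces more than $t^2/4-O(t)$ deleted edges, which exceeds $tn\ge t\,\tau(G)$ for the chosen $t$ once $n$ is large. In this case I output the trivial cover $V(G)$.
\item If $a\ge t/2$, then $C\cap V(G)$ is a clique of $\bar G$, so $X'=V(G)\setminus C$ is a vertex cover of $G$. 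Each vertex of $X'$ is adjacent to all $a$ vertices of $C\cap K_t$, and all those edges are deleted. Each of the $t-a$ vertices of $K_t\setminus C$ also loses $a$ edges.
\end{itemize}

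\textbf{Main obstacle.} The delicate step is the second case: I need $c\ge(1-O(\epsilon))\,t\,|X'|$, but the direct count only gives $c\ge a|X'|+a(t-a)$. When $a$ is noticeably smaller than $t$, I would repair this by rebalancing rather than by a sharper count. Moving $K_t\setminus C$ into $C$ keeps $C$ a clique and changes the cost by at most $-a(t-a)+(t-a)n$, which is non-positive since $a\ge t/2\gg n$. So without loss of generality all of $K_t$ lies in $C$, and then $c\ge t\,|X'|$.

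\textbf{Concluding.} Combining the two bounds, a $\rho$-approximate Cluster Deletion solution for $H$ yields a vertex cover of size at most
\[
\frac{\rho\,\OPT(H)}{t}\le\rho(1+\epsilon)\,\tau(G).
\]
So a $(\sqrt2-\epsilon')$- or $(2-\epsilon')$-approximation for Cluster Deletion would give the corresponding Vertex Cover approximation, contradicting the cited hardness results. This proves Theorem~\ref{thm:p-time}.
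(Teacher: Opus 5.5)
Your proof is correct. It uses the same construction and the same two estimates as the paper: $H=\overline{G}\vee K_t$ with $t=\Theta(n^2)$, the upper bound $\OPT(H)\le t\,\tau(G)+O(n^2)$, and the lower bound of $t$ deleted edges per vertex outside the cluster containing $K_t$. It differs from the paper in two technical points.

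First, the paper makes the reduction \emph{exactly} ratio-preserving (Theorem~\ref{thm:reduction}). It fixes an edge $v_1v_2$, runs the algorithm on $\overline{G-v_i}\vee K_t$ for $i=1,2$, and uses $\min_i\tau(G-v_i)\le\tau(G)-1$. With $t=\lceil cn^2/(2(c-1))\rceil$, the gain of one unit in $\tau$ pays for the additive $O(n^2)$ term. You instead absorb the additive term into a factor $1+\epsilon$, using $\tau(G)\ge1$ and $t=\lceil n^2/\epsilon\rceil$. This costs an arbitrarily small constant, which is harmless here because the thresholds are $\sqrt2-\epsilon'$ and $2-\epsilon'$. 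You should say explicitly that $\epsilon$ is chosen so that $(\sqrt2-\epsilon')(1+\epsilon)\le\sqrt2-\epsilon'/2$, and likewise for $2$. What the paper's trick buys is the cleaner statement that a $c$-approximation yields a $c$-approximation.

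Second, the paper uses the true-twin exchange (Proposition~\ref{lem:true-twins}) to gather all of $K_t$ into one cluster, for any $t$. You instead split on the largest share $a$ of $K_t$ in a single cluster. For $a<t/2$, the cut inside $K_t$ alone is at least $t(t+1)/4>tn$, so the trivial cover $V(G)$ already has size below $c/t$. For $a\ge t/2$, you rebalance. This is valid, but it relies on $t$ being large compared with $n$, which you have anyway.

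The rebalancing is actually unnecessary. The direct count you considered insufficient already gives $a|X'|+a(t-a)-t|X'|=(t-a)(a-|X'|)\ge0$, since $a\ge t/2\ge n\ge|X'|$.
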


Our hardness results are based on a connection between optimal cluster deletion solutions and maximum cliques.
It is well known that an optimal cluster deletion solution does not need to preserve a maximum clique of the input graph, though greedily picking a largest clique of the remaining graph leads to a $2$-approximation~\cite{dessmark-07-cluster-edge-deletion}.\footnote{This does not yield a polynomial-time algorithm, of course, since finding a maximum clique is NP-hard.}
For example, let $G$ be a graph whose vertex set can be partitioned into two cliques, $X = \{x_{1}, \ldots, x_{5}\}$ and $Y = \{y_{1}, \ldots, y_{5}\}$,
and suppose that the only edges between $X$ and $Y$ are
\[
  \{x_i y_j \mid 1\le i,j\le 3\}.
\]
Then
\[
  \{x_1,x_2,x_3,y_1,y_2,y_3\}
\]
is the unique maximum clique of~$G$, of size~$6$.  Nevertheless, the unique optimal cluster deletion solution deletes all nine edges between $X$ and $Y$, thereby splitting this maximum clique.

The key observation behind our reduction is that this behavior changes after adding sufficiently many universal vertices.
If we extend a graph~$G$ by adding a large number~$t$ of universal vertices, then every optimal cluster deletion solution in the resulting graph must place these universal vertices together with a maximum clique of~$G$.
Thus, from a cluster deletion solution of the extended graph, one can recover a maximum clique of~$G$.
This gives a reduction from Clique, or equivalently from Vertex Cover in the complement graph, to Cluster Deletion.

More specifically, in the extended graph, the dominant part of the cluster deletion cost consists of edges between the added universal vertices and the original vertices not placed in the selected clique.
When~$t$ is sufficiently large, the optimal deletion cost is essentially proportional to the number of original vertices outside a maximum clique, i.e., to the vertex cover number of the complement of~$G$.
The reduction is summarized in the following theorem.

\begin{theorem}\label{thm:reduction}
  Let~$c>1$ be a constant.
  If there exists an $f(N)$-time $c$-approximation algorithm for Cluster Deletion on $N$-vertex graphs, then there exists an $ O(f(c'n^2)+n^4)$-time $c$-approximation algorithm for Vertex Cover on $n$-vertex graphs, where~$c'$ is a constant depending only on~$c$.
\end{theorem}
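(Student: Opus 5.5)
Given a Vertex Cover instance $H$ on $n$ vertices, the plan is to let $G=\overline{H}$, so that a maximum clique of $G$ is the complement of a minimum vertex cover of $H$ and $\omega(G)=n-\tau(H)$. I then build $\Gt$ with $t=c'n^2$ universal vertices, which has $N=n+t=O(c'n^2)$ vertices and $O(n^4)$ edges. I run the assumed $c$-approximation on $\Gt$ and extract a vertex cover from its output. Constructing the graph and reading off the answer each take $O(n^4)$ time, which gives the claimed running time.

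The first step is an upper bound on $\OPT(\Gt)$. Take a maximum clique $K$ of $G$ and put $K$ together with all $t$ universal vertices into one cluster. Cluster $G-K$ arbitrarily, for instance into singletons. Every edge from a universal vertex to $V(G)\setminus K$ is deleted, together with at most $\binom{n}{2}$ edges of $G$. Hence
\[
  \OPT(\Gt)\le t\,\tau(H)+\tbinom{n}{2}.
\]

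The second step is a per-universal-vertex lower bound, which is also the extraction rule. Fix any clustering $\mathcal{C}$ of $\Gt$ and a universal vertex $u$. Let $Q_u$ be the set of original vertices in $u$'s cluster. This set is a clique of $G$, and all $n-|Q_u|$ edges from $u$ to $V(G)\setminus Q_u$ are deleted. No such edge is counted twice across different universal vertices, so
\[
  \ce(\mathcal{C})\ge\sum_u\bigl(n-|Q_u|\bigr)\ge t\,\bigl(n-q\bigr),
\]
where $q=\max_u|Q_u|$. In particular $\OPT\ge t\,\tau(H)$, since $q\le\omega(G)$. From the output I take the cluster attaining $q$ and output $S=V(H)\setminus Q_u$. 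This set is a vertex cover of $H$ because $Q_u$ is independent in $H$. If $\mathcal{C}$ is a $c$-approximation, then
\[
  |S|=n-q\le \frac{c\,\OPT}{t}\le c\,\tau(H)+\frac{c\binom n2}{c'n^2}\le c\,\tau(H)+\frac{c}{2c'}.
\]
Choosing $c'$ large makes the additive error $\delta=c/(2c')$ an arbitrarily small constant.

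The main obstacle is removing this additive $\delta$, so that the ratio is exactly $c$ rather than $c+\delta/\tau$. Because $|S|$ is an integer, it suffices that $\lfloor c\tau+\delta\rfloor=\lfloor c\tau\rfloor$. This fails only when $c\tau$ lies within $\delta$ below an integer.

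My first attempt would be to tighten the upper bound on the error term. Instead of paying $\binom n2$ for $G-K$, I would compare with the clustering of $\mathcal{C}$ itself: the edges inside $G$ deleted by $\mathcal{C}$ are also charged, so any such edge only helps the lower bound. This should let the inequality become $|S|\le c\tau$ outright whenever $c\tau$ is not just below an integer.

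As a fallback, I would note that when $\tau\le\tau_0(c)$ the cover can be found by brute force, and that $\tau\le 2$ is checkable within the $O(n^4)$ budget. For larger $\tau$ I would pick $c'$ with $\delta<(c-1)$ times the minimum positive gap that matters, or alternatively amplify $\tau$ by replacing each vertex of $H$ with an independent set. Pinning down this rounding/slack argument is where I expect the real care to be needed. The structural core is the lower bound $\sum_u(n-|Q_u|)$, which is straightforward.
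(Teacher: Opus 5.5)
You have the paper's structural core right: the complement, the join with $K_t$, the bound $\OPT\le t\,\tau(H)+\binom{n}{2}$, and extracting a clique from a universal vertex's cluster. Your per-universal-vertex count $\sum_u(n-|Q_u|)$ is even a little more direct than the paper's detour through the true-twin proposition. The genuine gap is the step you flag yourself. You only reach $|S|\le c\,\tau(H)+\delta$, and none of your proposed fixes removes $\delta$:
\begin{itemize}
\item Charging the internal edges deleted by $\mathcal{C}$ does not help. The slack is $c$ times the internal part $R$ of $\OPT$, while the returned clustering need delete only about $R$ internal edges. So roughly $(c-1)R$ of slack, which can be $\Theta(n^2)$, survives as an additive $\Theta(1/c')$ in $|S|$.
\item A positive ``minimum gap'' exists only for rational $c=p/q$, where $\delta<1/q$ would indeed suffice. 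For irrational $c$ the fractional parts of $c\tau$ are dense in $[0,1)$, so for every fixed $c'$ there are arbitrarily large bad values of $\tau$. Brute-forcing $\tau\le\tau_0(c)$ therefore does not remove them.
\item Blowing vertices up into independent sets of size $k$ multiplies $n$ and $\tau$ by $k$. Projecting back only divides the additive error by $k$, so the same integrality problem remains. A non-constant $k$ also breaks the $f(c'n^2)$ bound.
\end{itemize}

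The missing idea is to trade the additive error for a multiplicative saving by guessing one vertex of an optimal cover, using just two oracle calls.
\begin{itemize}
\item After disposing of trivial instances, pick an edge $v_1v_2$ of $H$. Some $i\in\{1,2\}$ satisfies $\tau(H-v_i)\le\tau(H)-1$.
\item Apply your reduction to $H-v_1$ and to $H-v_2$ with $t=\lceil cn^2/(2(c-1))\rceil$, so that $cn^2/2\le t(c-1)$.
\item Keep the branch whose returned deletion set $E_i$ is smaller; you need not know which branch is the good one. Then $|E_i|<ct(\tau(H)-1)+t(c-1)=ct\,\tau(H)-t$.
\item Your extraction now gives a cover of $H-v_i$ of size less than $c\,\tau(H)-1$. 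Adding $v_i$ yields a cover of $H$ of size less than $c\,\tau(H)$, since $1+c(\tau-1)+(c-1)=c\tau$.
\end{itemize}
Your fallback already names the right threshold, $\delta<c-1$. It only becomes effective once the reduction is run on an instance whose optimum is guaranteed to be one smaller.
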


Theorem~\ref{thm:p-time} follows from Theorem~\ref{thm:reduction} together with the known inapproximability results for Vertex Cover~\cite{khot-08-vc-ugc,khot-18}.
Our reduction also has consequences for restricted graph classes.
We say that a graph class~$\mathcal{G}$ is \emph{closed under adding universal vertices} if, for every $G\in\mathcal{G}$, the graph obtained from~$G$ by adding one new vertex adjacent to every vertex of~$G$ also belongs to~$\mathcal{G}$.

\begin{theorem}\label{thm:4}
  Let~$\mathcal{G}$ be a graph class.
  If Clique is NP-hard on~$\mathcal{G}$ and~$\mathcal{G}$ is closed under adding universal vertices, then Cluster Deletion is NP-hard on~$\mathcal{G}$.
\end{theorem}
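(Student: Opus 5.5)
The plan is to reduce Clique on~$\mathcal{G}$ to Cluster Deletion on~$\mathcal{G}$ using the universal-vertex construction described above. Given $G\in\mathcal{G}$ with $n$ vertices and $m$ edges, form $\Gt$ by adding $t$ universal vertices one at a time. Closure under adding universal vertices keeps every intermediate graph, and hence $\Gt$, in~$\mathcal{G}$. I will take $t$ polynomial in~$n$; $t=n^2$ should suffice. The core claim is that every optimal cluster deletion solution of~$\Gt$ places all $t$ universal vertices in a single clique together with a maximum clique~$K$ of~$G$. Moreover, the optimal cost equals
\[
\min_{K}\Bigl( t\,(n-|K|) + \bigl(m-\tbinom{|K|}{2}-e(K,V\setminus K)\bigr) - \OPT_{\mathrm{rest}}\Bigr)
\]
up to lower-order terms of size at most $\binom n2$, where $K$ ranges over cliques of~$G$ and $e(K,V\setminus K)$ counts edges between $K$ and the remaining vertices. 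In simpler terms, the cost is $t(n-\omega(G))+O(n^2)$.

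The key steps, in order, are as follows.

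\textbf{Step 1: the universal vertices form a single cluster.} Suppose the universal vertices are split among several clusters. Pick the cluster containing the most universal vertices and move every universal vertex into it. Each moved universal vertex $u$ then gains at least as many preserved edges (to the larger universal group, at least $t/2$ of them when split in two) as it loses (at most $n$ edges to original vertices in its old cluster, plus the universal vertices there). I will argue with a direct exchange: merging every universal vertex into the cluster $C$ maximizing $|C|$ does not increase the deletion cost. This holds because every universal vertex is adjacent to everything, so the union of all universal vertices with $C\cap V(G)$ is still a clique. A cleaner way to say this is that the number of preserved edges is $\sum_C\binom{|C|}{2}$, and moving a universal vertex from a smaller cluster to a larger one increases this sum.

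\textbf{Step 2: the attached part is a maximum clique.} With all universal vertices in one cluster $U\cup K$, where $K$ is a clique of~$G$, the cost is $t(n-|K|)$ plus deletions inside~$G$, which lie between $0$ and~$m\le\binom n2$. Take $t>\binom n2$. Then any solution with $|K|<\omega(G)$ costs at least $t(n-\omega(G))+t>t(n-\omega(G))+\binom n2$. That exceeds the cost of attaching a maximum clique, so optimal solutions have $|K|=\omega(G)$.

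\textbf{Step 3: the decision reduction.} Write $\OPT(\cdot)$ for the optimal cluster deletion cost. Then $\omega(G)\ge k$ if and only if $\OPT(\Gt)\le t(n-k)+\binom n2$, because the intervals for different values of $\omega$ are disjoint when $t>\binom n2$. The construction is polynomial, which proves NP-hardness on~$\mathcal{G}$.

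I expect the main obstacle to be making Step~1 rigorous. A solution need not keep all universal vertices together a priori. The convexity argument through $\sum_C\binom{|C|}{2}$ handles this, but it needs care to check that the target cluster stays a clique after the move. It does, because the moved vertices are universal, and every original vertex in the target cluster is adjacent to all of them.
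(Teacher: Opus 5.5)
Your proposal is correct and follows the paper's proof. You join with $t$ universal vertices, use the exchange that moves a universal vertex into a cluster at least as large as its own (the paper's true-twin Proposition~\ref{lem:true-twins}) to put all of $U$ in one cluster, and sandwich $\OPT(\Gt)$ between $t(n-\omega(G))$ and $t(n-\omega(G))+m$. Your choice $t>\binom{n}{2}$ (rather than $t=m$) only avoids needing the strict inequality in Proposition~\ref{lem:vc-ce}, and your threshold test is a Karp-reduction version of the paper's recovery of a maximum clique from an optimal solution.

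Two slips, neither used downstream, should still be fixed. First, the displayed ``exact cost'' formula is garbled: with cluster $U\cup K$ the best cost is $t(n-|K|)+e(K,V\setminus K)+\OPT(G-K)$. Second, the first version of Step~1, merging into the cluster with the most universal vertices, can increase the cost, so keep only the largest-cluster version.
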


Combining this theorem with the classical result of Alekseev~\cite{alekseev-82-independent-set} implies NP-hardness of Cluster Deletion on many graph classes.
Since Cluster Deletion is a special case of Constrained Clustering, all of our hardness results also apply to Constrained Clustering.

Finally, we discuss Cluster Deletion and Constrained Clustering on sparse graphs.
Komusiewicz and Uhlmann~\cite{komusiewicz-12-locally-bounded} observed that a $4$-regular graph~$G$ can be partitioned into vertex-disjoint triangles if and only if $G$ has a cluster deletion set of size~$|V(G)|$; see also~\cite{bansal-04-correlation-clustering}.
This implies that Cluster Deletion is NP-hard even on bounded-degree graphs.
However, our reduction always produces dense graphs, and therefore cannot rule out better approximation algorithms for sparse instances.
Indeed, better-than-$2$ approximations are possible for bounded-degree graphs, and more generally for the special case in which all clusters are required to have bounded size~\cite{puleo-15-bounded-cluster-sizes}.

\begin{theorem}\label{thm:better-than-2}
  There exists a polynomial-time $1.92$-approximation algorithm in each of the following cases:
  \begin{itemize}
    \item Cluster Deletion when the clique number of the input graph is bounded by a constant; or
    \item Constrained Clustering when all clusters are required to have bounded size.
  \end{itemize}
\end{theorem}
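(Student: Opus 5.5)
The plan is to reduce both cases to Constrained Clustering with clusters of size at most a constant~$k$, and then obtain a ratio below~$2$ by combining two algorithms and taking the better solution.

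\textbf{Reduction of the first case.} In Cluster Deletion every cluster must be a clique of the input graph. If the clique number is at most~$k$, every cluster automatically has size at most~$k$. So the first case is the special case of the second in which all non-edges are hostile and the size bound is~$\omega(G)\le k$.

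\textbf{Setup for bounded cluster sizes.} The feasible clusters are the subsets~$S$ of size at most~$k$ that contain no hostile pair, and whose induced friendly pairs are all kept. There are only $O(n^k)$ such sets, so we can write the natural set-partitioning LP in polynomial time. It has one variable~$x_S$ per feasible cluster and the constraints $\sum_{S\ni v}x_S=1$. Its cost is the number of edges cut plus the number of non-edges inside clusters, both expressed linearly through the~$x_S$. In the Cluster Deletion setting this cost is just $\sum_{uv\in E}\bigl(1-\sum_{S\supseteq\{u,v\}}x_S\bigr)$. We solve this LP to get a fractional cost $\OPT_{LP}\le\OPT$.

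\textbf{Rounding.} We round by a pivot-style procedure in the spirit of~\cite{puleo-15-bounded-cluster-sizes}. Repeatedly pick a uniformly random remaining vertex~$v$, choose a cluster $S\ni v$ with probability~$x_S$, and output $S\cap R$, where $R$ is the set of remaining vertices. Every subset of a feasible cluster is again feasible (cliques and hostile-free sets are closed under subsets), so the output is always feasible. The analysis is triangle-based, as in Ailon et al. and Veldt et al. An edge $uv$ is cut only when a pivot separates $u$ and $v$. Conditioned on a pivot in $\{u,v\}$, the cut probability is bounded by the LP mass $1-\sum_{S\supseteq\{u,v\}}x_S$ of that edge. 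The remaining contribution comes from pivots at third vertices~$w$, which we charge to the LP cost of edges $uw$ and $vw$. The plain charging gives~$2$, matching Veldt et al. To beat~$2$, we exploit that a vertex belongs to at most $k-1$ clusters worth of neighbours. More precisely, we combine this rounding with a second, local algorithm: the exact optimum restricted to clusters inside a fixed greedy structure, or a threshold rounding that takes every cluster with $x_S>1/2$. Then we output the better of the two and bound the expectation by a convex combination of the two guarantees. Optimizing the mixing weight should give a constant like~$1.92$.

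\textbf{Main obstacle.} The crux is the triangle-charging inequality. We need that, for every bad configuration in which the pivot rounding pays nearly~$2$ times the LP cost, the alternative algorithm pays noticeably less. I would first try to identify the tight instances of the factor-$2$ analysis. These are fractional solutions spreading mass $1/2$ over pairs of overlapping clusters. I would then show that in such instances the threshold rounding (or the laminar structure of clusters with $x_S>1/2$) is near-optimal. After that, I would set up a small factor-revealing LP over the possible $x$-patterns on a triangle, with at most $O(k)$-sized local configurations, to extract the explicit constant~$1.92$. If the constant does not come out cleanly, I would fall back to a concrete case analysis on how the $x_S$ values split among clusters containing the three vertices.
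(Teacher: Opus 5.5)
\textbf{Verdict: genuine gap.} Your first two steps are correct and match the paper. In Cluster Deletion every cluster is a clique of $G$, so it has at most $\omega(G)\le k$ vertices. With a size bound~$k$, the cluster LP has only $O(n^k)$ variables and can be solved exactly in polynomial time. In the paper, this is essentially the whole proof.

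The gap is the step that is supposed to produce the constant~$1.92$. Your rounding section is a plan, not an argument:
\begin{itemize}
  \item the second algorithm is left undetermined (you offer two unrelated candidates);
  \item no per-triangle or per-pair inequality is stated for either algorithm;
  \item nothing shows that their worst cases are complementary;
  \item the conclusion is only that optimizing a mixing weight ``should give a constant like~$1.92$''.
\end{itemize}
Even the intermediate claim that plain triangle charging for your pivot rounding gives factor~$2$ for Constrained Clustering is asserted rather than proved. The factor~$2$ of Veldt et al.\ is for Cluster Deletion, and uses a different LP and a different rounding.

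The missing idea is that no new rounding is needed. Kalavas et al.~\cite{kalavas-25-constrained-correlation-clustering} give a rounding for the cluster LP extended by $x_{uv}=0$ for friendly pairs and $x_{uv}=1$ for hostile pairs. It turns any solution of this LP into a feasible clustering of cost at most $1.92$ times the LP value. What blocks a general $1.92$-approximation is only that this LP has exponentially many variables, and no polynomial-time way to solve it to sufficient accuracy is available (under the UGC none can exist). Once all $z_S$ with $|S|>k$ are dropped, the LP has polynomial size, can be solved exactly, and the rounding applies as a black box.

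A smaller point concerns feasibility with friendly pairs. A feasible cluster $S$ must contain both or neither endpoint of each friendly pair; your phrase about ``induced friendly pairs'' does not enforce this. With that correct condition, your claim that every subset of a feasible cluster is feasible is false, since a subset can split a friendly pair. What keeps $S\cap R$ feasible in your pivot scheme is that both $S$ and $R$ are unions of components of the friendly graph, so their intersection is too.
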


Most of the best known algorithms for Cluster Editing and Cluster Deletion rely on complex rounding schemes.
Even the simplest LP-based approaches require solving a linear program with $\Theta(n^3)$ constraints~\cite{ailon-08-aggregating-inconsistent-information, zuylen-09-deterministic-approximation}.
This has motivated a line of work on more scalable, and even purely combinatorial, algorithms, at the cost of slightly worse approximation guarantees~\cite{veldt-22-clustering-via-strong-triadic-closure, bengali-23-faster-clustering, makarychev-23-single-pass, cao-24-polylogarithmic-rounds, cao-24-cluster-deletion, veldt-26-constrained-correlation-clustering}.
A common ingredient in these algorithms is a relaxation based on local obstructions: induced $P_3$'s for Cluster Deletion and bad triangles for Cluster Editing.

We use a \emph{bad triangle} to denote the two edges together with the missing edge of an induced $P_3$ (path on three vertices).\footnote{Cluster Editing can be alternatively formulated on signed complete graphs, where missing edges are treated as negative edges. In that formulation, known as \emph{Correlation Clustering}, the instance is a complete graph and a bad triangle is literally a triangle.}
The \emph{Bad Triangle Transversal} problem asks for a minimum-size set~$S$ of edges and missing edges such that every bad triangle intersects~$S$.
By definition, every cluster editing solution is a solution to Bad Triangle Transversal, but the converse does not hold in general.

The aforementioned approximation algorithms use different methods to transform bad triangle transversals into cluster editing solutions.
Their performance therefore depends on the maximum possible gap between the two optimal values.
Veldt~\cite{veldt-22-clustering-via-strong-triadic-closure} showed that the optimal Cluster Editing value is at most twice the optimal bad triangle transversal.
Adriaens and Tatti~\cite{adriaens-26-bad-triangle-transversals} improved this factor to $1.5$, and asked whether the two optimum values are always equal.
We give an explicit small example: a $31$-vertex graph~$G$ with the former strictly larger.

\section{The Reduction from Vertex Cover}

All graphs discussed in this paper are undirected and simple.  The vertex set and edge set of a graph $G$ are denoted by, respectively, $V(G)$ and $E(G)$.
For a subset $U\subseteq V(G)$, denote by $G[U]$ the subgraph of $G$ induced by $U$, and by $G - U$ the subgraph $G[V(G)\setminus U]$, which is further shortened to $G - v$ when $U = \{v\}$.
The \emph{neighborhood} of a vertex $v$ in $G$, denoted by $N_{G}(v)$, comprises vertices adjacent to $v$, i.e., $N_{G}(v) = \{ u \mid u v\in E(G) \}$, and the \emph{closed neighborhood} of $v$ is $N_{G}[v] = N_{G}(v) \cup \{ v \}$.
We omit the subscript when the graph is clear from context.
Two vertices $u$ and $v$ are true twins in $G$ if $N[u] = N[v]$; note that true twins are necessarily adjacent.
A \emph{clique} is a set of pairwise adjacent vertices, and an \emph{independent set} is a set of pairwise nonadjacent vertices.
A graph $G$ is \emph{complete} if $V(G)$ is a clique.
A vertex $v$ is \emph{universal} if $N[v] = V(G)$.  

Our reduction is based on adding sufficiently many universal vertices.  This is known as the join of $G$ and a complete graph in graph-theoretic terminology.

\begin{definition}[Construction]
  Let~$G$ be a graph and let~$t$ be a positive integer.  The graph~$\Gt$ is obtained from~$G$ by adding a clique~$U$ of~$t$ new vertices and making every vertex of~$U$ adjacent to every vertex of~$V(G)$.
\end{definition}

Let~$\OPT(G)$ denote the size of a minimum cluster deletion set of~$G$, and let $\omega(G)$ denote the size of a maximum clique of~$G$.

\begin{proposition}\label{lem:vc-ce}
  Let~$n = |V(G)|$ and~$m = |E(G)|$.  If $m > 0$, then
  \[
    \OPT(\Gt) \le 
      t(n-\omega(G)) + m - \binom{\omega(G)}{2}
      \le
      t(n-\omega(G)) + m.
    \]
    Moreover, if $m>0$, then the second inequality is strict.
\end{proposition}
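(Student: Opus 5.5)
The first inequality follows from exhibiting one explicit cluster deletion solution of $\Gt$ and counting the edges it deletes. Fix a maximum clique $K$ of $G$, so $|K| = \omega(G)$. I will use the partition of $V(\Gt)$ whose one nontrivial part is $U \cup K$ and whose other parts are the singletons $\{v\}$ for $v \in V(G)\setminus K$.

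This partition is into cliques. The set $U$ is a clique, $K$ is a clique, and every vertex of $U$ is adjacent to every vertex of $V(G)$, so $U\cup K$ is a clique of $\Gt$. Singletons are trivially cliques.

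Next I count the deleted edges, which are exactly the edges of $\Gt$ whose endpoints lie in different parts. There are two kinds.
\begin{itemize}
  \item Edges between $U$ and $V(G)\setminus K$: there are exactly $t(n-\omega(G))$ of them, since each vertex of $U$ is adjacent to all of $V(G)$.
  \item Edges of $G$ not contained in $K$: since $K$ is a clique of $G$ with $\binom{\omega(G)}{2}$ internal edges, there are $m - \binom{\omega(G)}{2}$ of them.
\end{itemize}
No edge inside $U$ and no edge between $U$ and $K$ is deleted. Hence this partition costs exactly $t(n-\omega(G)) + m - \binom{\omega(G)}{2}$, and $\OPT(\Gt)$ is at most this value.

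The second inequality is immediate because $\binom{\omega(G)}{2}\ge 0$. For strictness when $m>0$: $G$ has an edge, so it contains a clique of size $2$ and $\omega(G)\ge 2$. Then $\binom{\omega(G)}{2}\ge 1$, which makes the second inequality strict.

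There is no real obstacle here. The only points needing care are to count each deleted edge exactly once, which holds because the two kinds above are disjoint, and to note that strictness uses $\omega(G)\ge 2$.
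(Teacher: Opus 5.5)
Your proposal is correct and matches the paper's proof: you use the same explicit clustering ($U$ together with a maximum clique of $G$, every other vertex a singleton) and the same two-part count of deleted edges. Strictness comes, as in the paper, from $m>0$ forcing $\omega(G)\ge 2$.
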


\begin{proof}
  Let $H = \Gt$.
  Let~$S$ be a maximum clique of~$G$, and let $V_{-} = V(G)\setminus S$.
  Delete all edges of $H$ incident to vertices of $V_-$.  The resulting graph is a cluster graph: one cluster is $U\cup S$, and every vertex of $V_-$ is an isolated singleton cluster.
  The cost of this solution is
  \[
    |U| \cdot |V_{-}| + |E({G})\setminus E({G}[S])| =
    t (n - \omega(G)) + m - \binom{|S|}{2}
    = t(n-\omega(G)) + m - \binom{\omega(G)}{2}.
\]
  It is strictly smaller than $t (n - \omega(G)) + m$ because~$\omega(G) > 1$ when $m > 0$.
\end{proof}

By construction, the vertices of~$U$ are true twins and are universal in~$\Gt$.
It is well known that optimal solutions for cluster editing preserve true twins.  We need the following deletion-only variant.  For completeness, we include a proof.

\begin{proposition}[Folklore]\label{lem:true-twins}
  Let~$E_-$ be a cluster deletion set of a graph~$G$.  There exists a cluster deletion set~$E'_-$ such that~$|E'_-|\le |E_-|$ and every pair of true twins of~$G$ remains adjacent in~$G-E'_-$.
\end{proposition}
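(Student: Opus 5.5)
The plan is the standard twin-merging exchange argument, iterated over twin classes. Think of the cluster deletion set $E_-$ as a partition $\mathcal{P}$ of $V(G)$ into cliques; its cost is the number of edges of $G$ joining different parts. Partition $V(G)$ into true-twin classes; each class is a clique. We show that any partition can be modified, without increasing the cost, so that a given twin class lies inside a single part. We then process the classes one at a time.

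Fix a twin class $T$ and suppose it meets several parts $P_1,\dots,P_k$ of $\mathcal{P}$. Every $v\in T$ has the same closed neighborhood $N[v]=N_T$. For a part $P_i$, the set $P_i\setminus T$ is a clique contained in $N_T$, since every vertex of $P_i\setminus T$ is adjacent to the twins of $T$ lying in $P_i$. Hence moving any $v\in T$ into $P_i$ keeps $P_i$ a clique. Define the gain of $P_i$ as $g_i=|P_i\setminus T|$. Now consider moving all of $T$ into the part $P_j$ maximizing $g_j$, possibly $j$ with $P_j\cap T\neq\emptyset$ (any $P_j$ containing a twin of $T$ qualifies).

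The cost comparison is the main step. Edges with no endpoint in $T$ are unaffected. Edges inside $T$ become all intact, so they do not increase. For $v\in T$, the number of its edges to $V\setminus T$ cut after the move is $|N_T\setminus T|-g_j$. Before the move, the number cut was $|N_T\setminus T|-g_{i(v)}\ge |N_T\setminus T|-g_j$, where $P_{i(v)}$ is $v$'s part. Summing over $v\in T$ shows the cost does not increase.

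After the move, every part not containing $T$ is untouched except for losing vertices of $T$, so it is still a clique. Moreover, the twin classes already processed remain unsplit, because we only relocated vertices of $T$. Iterating over all twin classes yields a partition in which every twin class lies in one part, so every pair of true twins stays adjacent. The resulting cut set $E'_-$ satisfies $|E'_-|\le|E_-|$.

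The only delicate point is checking that $P_j\setminus T\subseteq N_T$ when $P_j$ contains no vertex of $T$. We avoid it by restricting the choice of $j$ to parts that already meet $T$; the inequality still holds because $v$'s own part is among them.
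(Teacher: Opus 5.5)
Your proof is correct. It rests on the same key fact as the paper's proof: any cluster containing a member of a twin class lies inside the common closed neighborhood, so twins can be moved into such a cluster without breaking the clique property. The difference is in how the moves are organized.

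\begin{itemize}
  \item The paper moves a single vertex $v$ of a separated twin pair $uv$ into the larger of $C_u$ and $C_v$. Each such move lowers the cost by at least one, so the process terminates by this potential argument. It does not matter that a move may separate $v$ from other twins left behind in $C_v$.
  \item You move an entire twin class $T$ at once into the best part among those meeting $T$. This only needs the cost not to increase, and it finishes after one round per class. The price is the invariant that earlier classes are never split again, which you handle correctly by observing that the partition of $V(G)\setminus T$ is untouched.
\end{itemize}

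The paper's version directly gives the slightly stronger fact that an optimal solution never separates true twins. Yours gives this too once you note that a split clique $T$ has at least one cut edge inside it, and the move restores all such edges.

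One small presentational point: the parenthetical ``possibly $j$ with $P_j\cap T\neq\emptyset$'' reads as if parts avoiding $T$ were allowed, which contradicts your final paragraph. Since you defined $P_1,\dots,P_k$ as exactly the parts meeting $T$, just say the maximum is taken over those.
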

\begin{proof}
  If every pair of true twins remains adjacent in~$G-E_-$, then there is nothing to prove.  Otherwise, let~$uv\in E_-$ be an edge whose endpoints are true twins in~$G$.  Let~$C_u$ and~$C_v$ be the clusters containing~$u$ and~$v$, respectively, in~$G-E_-$.  Since $uv$ is deleted, we have $C_u\neq C_v$.

  Because~$u$ and~$v$ are true twins,
  \[
    C_u\cup C_v \subseteq N[u] = N[v].
  \]
  Assume without loss of generality that~$|C_u|\ge |C_v|$.  We modify the
  clustering by moving~$v$ from~$C_v$ to~$C_u$.  In terms of deleted edges, this
  replaces~$E_-$ by
  \[
    E'_-
    =
    E_-
    \cup \{vx \mid x\in C_v\setminus \{v\}\}
    \setminus \{vx \mid x\in C_u\}.
  \]
  The resulting graph is still a cluster graph: $v$ is now joined to every vertex of~$C_u$, and the remaining vertices of~$C_v\setminus\{v\}$ still form a clique.  Moreover, since only adjacencies between $v$ and~$C_u\cup C_v$ are changed,
  \[
    |E'_-|
    =
    |E_-| + |C_v|-1-|C_u|
    \le
    |E_-|-1
    <
    |E_-|.
  \]

  We repeat this operation as long as some pair of true twins is separated. Each iteration strictly decreases the size of the deletion set, so the process terminates.  The final deletion set has size at most~$|E_-|$ and preserves all pairs of true twins.
\end{proof}

Some remarks are in order.  In the weighted setting, one may simply merge a set of true twins into a single ``supernode.''  Similar observations on true twins have been used in parameterized algorithms~\cite{cao-12-kernel-cluster-editing,cao-22-edge-modification}.  They are less visible in approximation algorithms because, in the standard LP formulations, the edge between any pair of true twins receives value~$0$ in an optimal fractional solution; hence such vertices are never separated by the rounding.\footnote{Indeed, $x_{u v} = 0$ if and only if $u$ and~$v$ are true twins in the revised graph with edge set $E(G) \Delta \{e\mid x_{e} = 1\}$, with $\Delta$ denoting symmetric difference.}

\begin{lemma}\label{lem:ce-vc}
  Let~$n=|V(G)|$.  Given any cluster deletion set~$E_-$ of~$\Gt$, one can produce, in time $O((t+n)^2)$, a clique of~$G$ of size at least
  \[
    n-\frac{|E_-|}{t}.
  \]
\end{lemma}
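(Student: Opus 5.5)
First normalize $E_-$ with Proposition~\ref{lem:true-twins}: the vertices of $U$ are pairwise true twins in $\Gt$, so we may replace $E_-$ by a cluster deletion set $E'_-$ with $|E'_-|\le |E_-|$ in which $U$ stays inside a single cluster. This has to be done constructively within the time bound. Rather than simulating the iterative moves, I would work directly with the clusters of $\Gt-E_-$ (obtained in $O((t+n)^2)$ time via connected components). Among the clusters that meet $U$, let $C^\ast$ be one maximizing $|C^\ast\setminus U|$. Take $S=C^\ast\setminus U$ as the output; since $C^\ast$ is a clique of $\Gt$, $S$ is a clique of $G$.

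It remains to show $|S|\ge n-|E_-|/t$, i.e.\ $|E_-|\ge t(n-|S|)$. I would charge deleted edges between $U$ and $V(G)$. Let the clusters meeting $U$ be $C_1,\dots,C_k$, with $u_i=|C_i\cap U|$ and $s_i=|C_i\setminus U|\le |S|$; so $\sum u_i=t$. Every vertex of $V(G)$ is adjacent to all of $U$ in $\Gt$. A vertex $v\in V(G)$ lying outside all $C_i$ loses all $t$ of its edges to $U$. A vertex $v\in C_i\setminus U$ loses $t-u_i$ such edges. Hence the number of deleted $U$--$V(G)$ edges is
\[
t\Bigl(n-\sum_i s_i\Bigr)+\sum_i s_i(t-u_i).
\]
Using $t-u_i=\sum_{j\ne i}u_j$, the second sum equals $\sum_i\sum_{j\ne i}s_iu_j$. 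I would then show that this total is at least $t(n-|S|)$. Equivalently, I need $\sum_i s_i(t-u_i)\ge t\sum_i s_i - t|S|$, i.e.\ $t|S|\ge \sum_i s_iu_i$. This holds because $\sum_i s_iu_i\le |S|\sum_i u_i=t|S|$.

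So the argument does not even require the twin normalization; the choice of $S$ as the largest $V(G)$-part of a $U$-meeting cluster does the work. The edges of $U$ itself and of $G$ only add to $|E_-|$, so they can be ignored. I expect the main obstacle to be precisely this accounting when $U$ is split across several clusters. The inequality $\sum_i s_iu_i\le t\max_i s_i$ handles it cleanly, and I would double-check the edge count for vertices inside the $C_i$. The running time is dominated by reading $\Gt-E_-$ and computing its components, which is $O((t+n)^2)$.
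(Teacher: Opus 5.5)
Your proof is correct, and it takes a genuinely different route from the paper.

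The paper first applies Proposition~\ref{lem:true-twins} to replace $E_-$ by a no-larger deletion set $E'_-$ in which all of $U$ lies in a single cluster $C$. The bound is then immediate, because every vertex of $V(G)\setminus C$ loses all $t$ of its edges to $U$. You skip the normalization and use an averaging argument instead. The deleted $U$--$V(G)$ edges number exactly $tn-\sum_i s_iu_i$, and $\sum_i s_iu_i/t$ is a weighted average of the $s_i$ with weights $u_i/t$. So the largest $s_i$ already gives $t(n-\max_i s_i)\le tn-\sum_i s_iu_i\le |E_-|$.

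The paper's route reuses the twin-preservation principle that also drives its supernode and weighted-to-unweighted remarks. Yours is self-contained and makes the algorithm explicit: compute the components of $\Gt-E_-$ and take the cluster meeting $U$ with the most vertices of $G$. The $O((t+n)^2)$ bound is then evident.

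This is a real advantage. The paper's $K$ is defined through the normalized $E'_-$, and its proof does not say how the exchanges of Proposition~\ref{lem:true-twins} are carried out within that time bound. It can be done, for example by moving every vertex of $U$ into the largest cluster meeting $U$, each move being a legal exchange, but this needs saying.

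As you note yourself, your opening sentence about normalizing is superfluous, and the rewriting $t-u_i=\sum_{j\ne i}u_j$ is never used.
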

\begin{proof}
  Let~$H=\Gt$, and let~$U$ be the set of the $t$ added universal vertices.  By Proposition~\ref{lem:true-twins}, we may transform~$E_-$ into a cluster deletion set~$E'_-$ such that~$|E'_-|\le |E_-|$ and all vertices of~$U$ remain pairwise adjacent in~$H-E'_-$.  Since~$H-E'_-$ is a cluster graph, this means that all vertices of~$U$ belong to a single cluster; denote this cluster by~$C$.

  We return the clique
  \[
    K = C\setminus U.
  \]
  Because $C$ is a clique in $H-E'_-$ and no edge inside $V(G)$ is introduced, $K$ is a clique of $G$.
  Every vertex in~$V(G)\setminus K$ lies outside the cluster containing~$U$.
  Therefore, all edges between~$U$ and~$V(G)\setminus K$ must be deleted by
  $E'_-$.  Hence
  \[
    |E'_-|
    \ge
    |U|\cdot |V(G)\setminus K|
    =
    t(n-|K|).
  \]
  Since~$|E'_-|\le |E_-|$, we obtain
  \[
    |K|
    \ge
    n-\frac{|E'_-|}{t}
    \ge
    n-\frac{|E_-|}{t}.
  \]

  Given $E_-$ explicitly, the clustering of $H-E_-$ and the set $K$ can be computed in $O(|V(H)|^2)$ time.
\end{proof}

Proposition~\ref{lem:vc-ce} and Lemma~\ref{lem:ce-vc} already imply the NP-hardness of Cluster Deletion: it suffices to set~$t= |E(G)|$.

\begin{corollary}\label{cor:reduction}
  Let~$m = |E(G)|$.
  From any optimal cluster deletion set~$E_-$ of~$\Gm$, one can produce a maximum clique of~$G$ in polynomial time.
\end{corollary}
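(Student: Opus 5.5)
We apply Lemma~\ref{lem:ce-vc} with $t=m$ to the optimal set $E_-$ and then use Proposition~\ref{lem:vc-ce} to show the resulting clique has size exactly $\omega(G)$. If $m=0$, then any single vertex is a maximum clique; we assume $n\ge 1$ and output one vertex directly. From now on assume $m>0$.

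Let $E_-$ be an optimal cluster deletion set of $\Gm$, so $|E_-| = \OPT(\Gm)$. Proposition~\ref{lem:vc-ce} with $t=m$, together with the strictness of the second inequality when $m>0$, gives
\[
  |E_-| < m(n-\omega(G)) + m .
\]
Lemma~\ref{lem:ce-vc} with $t=m$ then produces, in time $O((m+n)^2)$, a clique $K$ of $G$ with
\[
  |K| \ge n - \frac{|E_-|}{m} > n - (n-\omega(G)) - 1 = \omega(G)-1 .
\]
Since $|K|$ is an integer, $|K|\ge \omega(G)$. By the definition of $\omega(G)$ we also have $|K|\le\omega(G)$, so $K$ is a maximum clique of $G$.

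The construction of $\Gm$ and the procedure of Lemma~\ref{lem:ce-vc} are both polynomial in $n$ and $m$, so the whole procedure runs in polynomial time. The only delicate point is the strict inequality, which is what lets the integrality argument close the gap of one. This strictness is exactly the ``moreover'' part of Proposition~\ref{lem:vc-ce}: when $m>0$ we have $\omega(G)\ge 2$, hence $\binom{\omega(G)}{2}\ge 1$. It is also the reason $t=m$ suffices, since dividing by $t$ must turn the slack $m$ into less than one vertex.
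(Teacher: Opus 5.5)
Your proposal is correct and follows essentially the same route as the paper's proof. You apply Lemma~\ref{lem:ce-vc} with $t=m$, then use the strict inequality from Proposition~\ref{lem:vc-ce} to obtain $|K|>\omega(G)-1$, and conclude by integrality. Your explicit remarks on $|K|\le\omega(G)$ and on the polynomial running time are harmless additions.
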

\begin{proof}
  Let~$n=|V(G)|$ and~$m=|E(G)|$.  If~$m=0$, then any single vertex is a maximum clique.  Hence assume that~$m>0$.
  Applying Lemma~\ref{lem:ce-vc} with~$t=m$, we obtain a clique~$K$ of~$G$ such that~$|K| \ge n - \frac{|E_{-}|}{m}$.
  Since~$E_-$ is optimal, Proposition~\ref{lem:vc-ce} gives
  \[
    |K| > n - \frac{m  (n - \omega(G)) + m}{m}
    = \omega(G) - 1.
  \]
  Since~$|K|$ is an integer, we conclude that $|K|=\omega(G)$.  Thus~$K$ is a maximum clique.
\end{proof}

\begin{proof}[Proof of Theorem~\ref{thm:4}]
  Let $G\in\mathcal{G}$ be an instance of Clique, and let $m=|E(G)|$.
  Since $\mathcal{G}$ is closed under adding one universal vertex, repeated application of the closure property gives $G\vee K_m\in\mathcal{G}$.
  If Cluster Deletion could be solved in polynomial time on $\mathcal{G}$, then we could compute an optimal cluster deletion set of $G\vee K_m$ and, by Corollary~\ref{cor:reduction}, recover a maximum clique of $G$ in polynomial time. This contradicts the NP-hardness of Clique on $\mathcal{G}$.
\end{proof}

For the main theorem, we use the following standard relation between cliques and vertex covers.  The \emph{complement graph}~$\overline{G}$ of a graph~$G$ is defined on the same vertex set~$V(G)$, where two distinct vertices~$u$ and~$v$ are adjacent in~$\overline{G}$ if and only if~$uv\notin E(G)$.
Note that the complement of~$\overline{G}$ is~$G$.
Let~$\tau(G)$ denote the size of a minimum vertex cover of~$G$.  Since a clique in a graph is an independent set in its complement, we have
\[
  \tau(G) = |V(G)| - \omega \left(\overline{G} \right).
\]

\begin{proof}[Proof of Theorem~\ref{thm:reduction}]
  Let~$G$ be the input graph for Vertex Cover, and let~$n=|V(G)|$.  If~$G$ is edgeless, then the empty set is an optimal vertex cover.  If~$G$ is complete, then any set of~$n-1$ vertices is an optimal vertex cover.  Thus we may assume that~$G$ has at least one edge and is not complete.

  Choose an arbitrary edge~$v_1v_2\in E(G)$.  For~$i=1,2$, define
  \[
    G_{i} = {G - v_{i}} \quad \text{ and }\quad
    H_{i} = \overline{G_{i}} \vee K_{t},
  \]
  where 
  \[
t = \left\lceil \frac{c n^{2}}{2(c-1)} \right\rceil.
  \]
  Since every vertex cover of~$G$ contains at least one endpoint of the edge~$v_1v_2$, there is an index~$i\in\{1,2\}$ such that deleting~$v_i$ from a minimum vertex cover of~$G$ leaves a vertex cover of~$G_i$.  Hence
  \[
    \min_{i \in \{1, 2\}}  \tau(G_{i}) \le \tau(G) - 1.
  \]

  For each~$i\in\{1,2\}$, run the assumed $c$-approximation algorithm for Cluster Deletion on~$H_i$, and let~$E_i$ be the returned deletion set.  By the approximation guarantee and Proposition~\ref{lem:vc-ce},
  \[
    |E_{i}| \le c\cdot \OPT(H_{i}) \le c \left( t \left( |V(G_i)| -\omega \left(\overline{G_{i}} \right) \right) + \left| E(\overline{G_{i}})\right| \right) < c t \tau(G_{i}) + c n^{2} / 2.
  \]
  We may assume without loss of generality that $|E_{1}| \le |E_{2}|$.  Then
  \[
|E_{1}|
    = \min_{i \in \{1, 2\}} |E_{i}|
    < c t \cdot \min_{i \in \{1, 2\}} \tau(G_{i}) + \frac{c n^2}{2} \le c t \left( \tau(G) - 1 \right) + \frac{c n^2}{2}.
  \]
  Since~$c n^{2} / 2 \le t(c-1)$ by the choice of~$t$, we have
  \[
    |E_{1}| < c t (\tau(G) - 1) + t(c-1) = c t\, \tau(G) - t.
  \]

  Now apply Lemma~\ref{lem:ce-vc} to~$E_1$ in~$H_1$.  This yields a clique $K$ of~$\overline{G_1}$ such that
  \[
    |K| \ge \left| V \left( \overline{G_{1}} \right) \right| - \frac{|E_{1}|}{t} \ge n - 1 - \frac{c t\, \tau(G) - t}{t} = n - c\tau(G).
  \]
  Since~$K$ is a clique in~$\overline{G_1}$, it is an independent set in~$G_1$ and~$G$.  Hence, $V(G)\setminus K$ is a vertex cover of~$G$.
  The size of this vertex cover is
  $
    |V(G)\setminus K|
    =
    n-|K|
    \le c\tau(G).
  $
  Thus, we obtain a $c$-approximation for Vertex Cover.

  It remains to analyze the running time.  Let~$c' = \left\lceil \frac{c}{2(c-1)} \right\rceil + 1.$  Then
  each graph~$H_i$ has
  \[
    |V(H_i)| = n-1+ t \le c' n^2
  \]
  vertices.  Constructing~$\overline{G_i}$ takes $O(n^2)$ time, and explicitly constructing the join with~$K_t$ takes
  $
    O(|V(H_i)|^2)=O(n^4)
  $
  time.
  The assumed approximation algorithm is invoked twice, each time on a graph with at most~$c'n^2$ vertices.  Thus, the total time spent in the approximation algorithm is $2 f(c'n^2)$.  Finally, applying Lemma~\ref{lem:ce-vc} takes
  $ O((t+n)^2)=O(n^4) $ time.  Therefore, the total running time is
  \[
    O\left( f(c' n^{2}) + n^4 \right),
  \]
  as claimed.
\end{proof}

Theorem~\ref{thm:p-time} follows immediately from Theorem~\ref{thm:reduction}.  The NP-hardness within $\sqrt2-\epsilon$ follows from the inapproximability of Vertex Cover due to Khot et al.~\cite{khot-18}; the UGC-hardness within $2-\epsilon$ follows from Khot and Regev~\cite{khot-08-vc-ugc}.

\section{Further consequences and discussion}

We now discuss some implications of Theorem~\ref{thm:p-time}.

\subsection{Edge modification problems}

In an edge modification problem, we are asked to modify at most $k$ edges of a given graph $G$ to make the graph satisfy a certain property.  The most common of the operations are edge deletions, additions (also known as completion), and their combinations.
Since cluster graphs are precisely $P_3$-free graphs, edge modification problems to cluster graphs are the simplest of all nontrivial edge modification problems.  Note that edge modification problems toward $P_2$-free graphs, i.e., edgeless graphs, are trivial.

Our results suggest that, from the viewpoint of approximation, Cluster Deletion may be harder than Cluster Editing, despite the widely held belief that the former is easier.
From the perspective of algorithm designers, a single type of operation is ostensibly easier to handle than multiple kinds.
However, similar phenomena have long been observed in edge modification problems.
A graph is a \emph{split graph} if its vertex set can be partitioned into a clique and an independent set.
A classical result of Hammer and Simeone~\cite{hammer-81-splittance} gave a linear-time algorithm for the Split Editing problem.
In contrast, Split Deletion is NP-hard~\cite{natanzon-01-edge-modification}.
We also note that the APX-hardness of Cluster Deletion~\cite{shamir-04-graph-modification-problem} was shown before that of Cluster Editing~\cite{charikar-04-approximate-clustering}, with a far simpler reduction.

We remark that the Cluster Completion problem is trivial, because one has to add all the missing edges in each component, and the Split Completion problem is equivalent to the Split Deletion problem, because the complement of a split graph is also a split graph.
It is worth exploring whether these two classes are exceptions, or whether there is a more general explanation for why editing can be easier than one-sided modification variants.

\subsection{Relation to Multicut}

In the Multicut problem, we are given a supply graph $G$ and a demand graph $H$ with $V(H)\subseteq V(G)$, and asked for a minimum set of edges $E_{-}\subseteq E(G)$ so that no endpoints of an edge in $H$ are connected in $G - E_{-}$.
Cluster Deletion can be viewed as Multicut with supply graph $G$ and demand graph $\overline{G}$.

Karzanov~\cite{karzanov-89-polyhedra-multicommodity} observed that~$H$ decides the fractionality of the polytope of the standard LP formulation:
\begin{equation}
  \label{eq:1}
\left\{
    x\in \mathbb{R}_{\ge 0}^{E(G)} :
    \sum_{e\in P} x_e \ge 1 \text{ for all $s$--$t$ path $P$ with } st\in E(H)
  \right\}.
\end{equation}
If $H$ has a certain property, then the polytope~\eqref{eq:1} is $\frac{1}{4}$-integral (i.e., every basic solution is a multiplier of $\frac{1}{4}$); otherwise, for every positive integer $k$, there exists a graph~$G$ such that the polytope is not $\frac{1}{k}$-integral.

Naturally, the demand graph $H$ also affects the approximability of the Multicut problem.
Near-tight inapproximability results are known when $H$ is a matching and when $H$ is a complete graph.
Our result settles the case in which the demand graph is $\overline{G}$, the complement of the supply graph.
An interesting question is whether these approximation phenomena follow a pattern analogous to Karzanov's characterization of the fractionality of the multicut polytope~\cite{karzanov-89-polyhedra-multicommodity}.

\subsection{Better-than-$2$ approximations in restricted settings}

The cluster LP~\cite{cao-24-lp, cao-25-sublinear}, reproduced as~\eqref{fig:constrained_cluster_lp}, considers the assignment of vertices to potential clusters.
It has a variable~$x_{uv}$ for every vertex pair $u, v \in V(G)$ and a variable~$z_S$ for every nonempty vertex set~$S \subseteq V(G)$.
The variable $x_{uv}$ is 1 if $uv$ is not an edge of the resulting graph, and the variable $z_S$ is 1 if $S$ is a cluster in the final clustering.

\begin{equation}
  \label{fig:constrained_cluster_lp}
  \begin{array}{r@{\quad}l@{\qquad}l}
    \min & \displaystyle\sum_{uv \in E(G)} x_{uv} + \displaystyle\sum_{uv \not\in E(G)} (1 - x_{uv}) & \\[1.5em]
    \text{s.t.} & \displaystyle\sum_{\mathclap{S \ni u}} z_S = 1 & \forall u \in V, \\[1.5em]
         & \displaystyle\sum_{\mathclap{S \supseteq \{u,v\}}} z_S = 1 - x_{uv} & \forall u, v \in V(G), \\[1.5em]
         & z_S \ge 0 & \forall S \subseteq V, S \neq \emptyset. 
\end{array}
\end{equation}

Since $S$ ranges over all nonempty subsets of $V(G)$, this LP has exponentially many variables.  The first obstacle is thus to solve it (approximately) in polynomial time.  Cao et al.~\cite{cao-24-lp} gave a polynomial-time approximation scheme (PTAS), followed by a very nontrivial rounding step.
It is easy to extend LP~\eqref{fig:constrained_cluster_lp} to Constrained Clustering by adding constraints to enforce the hard constraints of the instance on the set $F$ of friendly pairs and the set $H$ of hostile pairs:
\begin{align*}
  x_{uv} = 0 &\quad \forall uv \in F,
  \\
  x_{uv} = 1 &\quad \forall uv \in H.
\end{align*}
Kalavas et al.~\cite{kalavas-25-constrained-correlation-clustering} developed a rounding algorithm that turns an (approximate) solution of the extended LP into a solution for the Constrained Clustering instance with a loss of at most~$1.92$.
Thus, a PTAS for solving the extended LP would imply a~$(1.92+\epsilon)$-approximation for Constrained Clustering.
Consequently, assuming the UGC, there cannot be a polynomial-time scheme that computes a sufficiently accurate approximate solution to the extended LP in full generality; otherwise, combined with the rounding of Kalavas et al., it would yield a better-than-$2$ approximation for Constrained Clustering, violating Theorem~\ref{thm:p-time}.

Kalavas et al.~\cite{kalavas-25-constrained-correlation-clustering} tried to build a randomized PTAS for the new LP but did not succeed.\footnote{To exclude randomized PTAS using Theorem~\ref{thm:reduction}, one needs stronger complexity assumptions, which we do not elaborate.}  Shi Li has also indicated that such a PTAS is unlikely even for highly restricted choices of $F$ and $H$.\footnote{Private communication.}
However, their rounding scheme does imply a better-than-2 approximation when the extended LP can be solved in polynomial time, in particular, when it has polynomial size.

\begin{proof}[Proof of Theorem~\ref{thm:better-than-2}]
  We use the extended cluster LP together with the rounding algorithm of
  Kalavas et al.~\cite{kalavas-25-constrained-correlation-clustering}.
  Their rounding algorithm converts a fractional solution of the extended LP into an integral solution whose cost is at most $1.92$ times the LP value.
  Therefore, it suffices to show that, in the two cases considered here, the extended LP has polynomial size and can be solved in polynomial time.

  The cluster LP has a variable $z_S$ for every nonempty set $S\subseteq V(G)$,
  indicating whether $S$ is chosen as a cluster. If every feasible cluster has
  size at most $k$, where $k$ is a fixed constant, then all variables $z_S$ with
  $|S|>k$ can be omitted. The number of remaining variables is
  \[
    \sum_{i=1}^k \binom{n}{i}=O(n^k),
  \]
  which is polynomial for fixed $k$. The resulting LP also has polynomially many
  constraints, and hence can be solved in polynomial time. Applying the rounding
  algorithm of Kalavas et al. gives a polynomial-time $1.92$-approximation.
  This proves the bounded-cluster-size case.

  Now consider Cluster Deletion on an input graph $G$ with clique number at most $k$. Since edge additions are not allowed, every cluster in any feasible solution must already be a clique of $G$.
Therefore every feasible cluster has size at most $\omega(G)\le k$. Thus the same polynomial-size LP argument applies, and the rounding algorithm gives a polynomial-time $1.92$-approximation.
\end{proof}

\section{Cluster Editing vs.~Bad Triangle Transversal}

For a graph~$G$, let~$\ce(G)$ and~$\btt(G)$ denote the minimum solution size of Cluster Editing and Bad Triangle Transversal, respectively.
We define the Cluster Editing--Bad Triangle Transversal ratio as
\[
  \rho = \sup_{G} \frac{\ce(G)}{\btt(G)}.
\]
Adriaens and Tatti~\cite[Theorem 1.4]{adriaens-26-bad-triangle-transversals} showed that
\[
  1\le \rho \le 1.5,
\]
and asked whether $\rho = 1$.

We give an example showing that
\[
   \rho \ge \frac{10}{9}.
\]
For ease of presentation, we first describe a weighted graph in Figure~\ref{fig:cc-ne-btt}, and then explain how to obtain an equivalent unweighted instance.
Each vertex $v$ has an integer weight $w(v)$, and the cost of modifying a pair $uv$ is $w(u)w(v)$.
For example, modifying the pair $x_1 y_1$ has cost $20$.

\begin{figure}[ht!]
  \centering
    \begin{tikzpicture}
      \node[empty vertex, "$x_0$" below] (x0) at (0, 0) {$4$};
      \foreach \i in {1, 2, 3} {
        \pgfmathsetmacro{\ang}{120*\i - 30}
        \node[empty vertex] (x\i) at (\ang:2) {$4$};
        \node at (\ang:2.5) {$x_{\i}$};
        \pgfmathsetmacro{\ang}{120*\i - 10}
        \node[empty vertex] (y\i) at (\ang:1.) {$5$};
        \node at ({\ang+25}:.85) {$y_{\i}$};
        \draw (x0) -- (y\i) -- (x\i) -- (x0);
      }
      \draw (x1) -- (x2) -- (x3) -- (x1);
    \end{tikzpicture}
\caption{A weighted graph.}
  \label{fig:cc-ne-btt}
\end{figure}

There are 15 induced $P_3$'s in Figure~\ref{fig:cc-ne-btt}:
\[
  x_{i} x_{0} y_{j},
  \quad
  x_{i} x_{j} y_{j},
  \quad
  y_{i} x_{0} y_{j},
  \quad  i, j\in \{1, 2, 3\}, \; i \ne j.
\]
Thus, an optimal solution to Bad Triangle Transversal is
\[
  \{x_{0} y_{i}\mid 1\le i \le 3\} \cup 
  \{x_{i} x_{j}\mid 1\le i < j \le 3\},
\]
with value
\[
  4 \times 5 \times 3 + 4 \times 4 \times 3 = 108.
\]

There are several optimal clusterings, one of which is
\[
  \{x_{0}, x_{1}, x_{2}, x_{3}\},
  \{y_{1}, y_{2}, y_{3}\}.
\]
The modified edges are all the edges between them,
with value
\[
  4 \times 5 \times 6 = 120.
\]

To obtain an unweighted graph as claimed, we replace each vertex of weight~$w$ by a set of~$w$ true twins.
We leave it to the reader to verify that Proposition~\ref{lem:true-twins} also holds for Cluster Editing, and that an analogous statement holds for Bad Triangle Transversal.

The following observation connects the approximation thresholds for these two problems as well as $\rho$.

\begin{proposition}\label{lem:ce-btt-reduction}
  Let~$c>1$ be a constant.  If there exists a polynomial-time $c$-approximation algorithm for Cluster Editing, then there exists a polynomial-time $c\rho$-approximation algorithm for Bad Triangle Transversal.
\end{proposition}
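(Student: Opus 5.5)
The plan is to run the assumed $c$-approximation algorithm for Cluster Editing on the input graph $G$ and return its output unchanged. Let $E^\ast$ be the set of modified pairs it produces, so that $|E^\ast| \le c\cdot \ce(G)$. The argument has two parts: feasibility of $E^\ast$ for Bad Triangle Transversal, and a bound on its cost.

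\textbf{Feasibility.} As observed earlier, every cluster editing solution is a bad triangle transversal. Concretely, suppose some bad triangle, formed by edges $uv, vw$ and missing edge $uw$, avoids $E^\ast$. Then all three pairs keep their status in the edited graph, so $u v w$ is still an induced $P_3$ there. This contradicts the edited graph being a cluster graph.

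\textbf{Cost.} By the definition of $\rho$ as a supremum, $\ce(G)\le \rho\cdot\btt(G)$ for every graph $G$ with $\btt(G)>0$. Therefore
\[
|E^\ast| \le c\cdot\ce(G) \le c\rho\cdot \btt(G).
\]
The remaining case is $\btt(G)=0$. Then $G$ has no induced $P_3$, so $G$ is already a cluster graph and $\ce(G)=0$. In that case the algorithm returns the empty set, since $c\cdot 0=0$, which is optimal. We may also detect this case directly in polynomial time.

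There is no real obstacle. The only points needing care are the degenerate case $\btt(G)=0$ and the fact that $\rho\le 1.5$ is finite by Adriaens and Tatti, so that $c\rho$ is a genuine constant. The running time is that of the Cluster Editing algorithm plus linear overhead.
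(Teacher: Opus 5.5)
Your proposal is correct and matches the paper's proof: run the Cluster Editing approximation, observe that its output is also a bad triangle transversal, and chain $|S|\le c\cdot\ce(G)\le c\rho\cdot\btt(G)$. Your separate treatment of the case $\btt(G)=0$, and your remark that $\rho\le 1.5$ is finite, are small details the paper leaves implicit.
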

\begin{proof}
  Given a graph~$G$, we use the assumed algorithm to compute an approximate cluster editing set~$S$.  Then
  \[
    |S| \le c\cdot \ce(G)
    \le c\rho\cdot \btt(G).
  \]
Since $S$ is also a solution to Bad Triangle Transversal, this gives a $c\rho$-approximation for Bad Triangle Transversal.
\end{proof}

On the one hand, a better-than-$\frac{2}{\rho}$ approximation for Cluster Editing would imply a better-than-2 approximation for Bad Triangle Transversal.
On the other hand, if Bad Triangle Transversal is hard to approximate within a factor $c$, then Cluster Editing is hard to approximate within a factor $c/\rho$.
For example, if Bad Triangle Transversal cannot be approximated within $2$, under certain assumptions, Cluster Editing cannot be approximated within $4/3$ under certain assumptions.  Here we are using the fact~$\rho \le 1.5$, and hence the lower bound can be further improved if we have a better bound for~$\rho$.

\paragraph{Acknowledgment.}
We are grateful to Shi Li for pointing out a bug in an earlier version of the paper and for answering our questions about~\cite{cao-24-lp,cao-25-sublinear}, and to Nate Veldt for helpful discussions, in particular for bringing \cite{kalavas-25-constrained-correlation-clustering} to our attention.

\bibliographystyle{plainurl}

\end{document}